\theoremstyle{plain}
\newtheorem{theorem}{Theorem}[section]
\newtheorem{lemma}[theorem]{Lemma}
\theoremstyle{definition}
\newtheorem{definition}[theorem]{Definition}
\theoremstyle{remark}
\numberwithin{equation}{section}
\newenvironment{acknowledgement}[1][Acknowledgement
]{\begin{trivlist} \item[\hskip \labelsep {\bfseries
#1}]}{\end{trivlist}}
\begin{document}
\title{Uniqueness of Inverse Spectral Problem of Non-Local Sturm-Liouville Operators on Star Graph}
\author[1]{Lung-Hui Chen}
\affil[1]{\footnotesize General Education Center, Ming Chi University of Technology, New Taipei City, 24301, Taiwan.}
\maketitle
\begin{abstract}
In this paper, we explore the inverse spectral problem of Sturm-Liouville operator on a star-like graph. To this fixed star-like graph centered at the origin as its vertex, we attach $m$ edges. On each edge, we impose the Sturm-Liouville operator with certain non-local potential functions with some suitable non-local boundary value conditions. At the vertex, we consider a frozen argument type of condition at zero to model a network that fixed on the end of each edge on the graph. The vibration and flow changes are monitored at that vertex which serves as certain control center.  There is an inverse uniqueness subject to the suitable non-local boundary condition. We show that the system is solvable. Additionally, we give a Weyl's type of spectral asymptotics.
\\MSC: 47A55/34A55/34K29.
\\Keywords:  Sturm-Liouville operator; star graph; inverse spectral problem; network theory; non-local regulator; interpolation theory; traffic control.
\end{abstract}
\section{Introduction}
In this paper, we discuss the mathematical theory of Sturm-Liouville operators on star graphs. It is one of the highly-pursued areas in modern mathematical physics, as well as in some communication engineering and technology \cite{Freiling,Kostrykin,Kuchment,Novikov,Pokornyi,Shieh}. The Sturm-Liouville operators on the graphs actually model the wave propagation and the diffusion process transmitting within the certain networks. It is all about sending or receiving signals back and forth within the network whether there are nodes or without nodes. This problem is driven both from the theoretical and commercial requirements to solve some specific problems related to telecommunication, electric power distribution, transportation route design, and signals transferring in all sorts of networks, with regulators or without regulators. The non-local boundary condition is a critical aspect that concerns the soundness and the robustness of modeling. In most of the cases, we require the network to be capable of two-way communications and following the physical laws of conservation of energy. The global law of conservation law is decisive in most of physical systems, it turns out that the non-local boundary condition ingenuously plays a role.

\par
The most interesting aspect of such modeling is its connection to the theory of quantum graphs, which studies the physical dynamics on metric graphs governed by various sorts of differential operators \cite{Alb,Bon2,Bon,Bru,Freiling,Gerasimenko,Kottos,Naimark}, along with many different kinds of boundary conditions which often compromise the energy conservation laws or even certain commercial requirements in real world scenarios. The boundary condition acts as an integral part of the modeling these specific problems. It plays a role in many engineering projects. We are interested to formulate and to solve a few number of direct and inverse spectral problems and scattering problems on quantum graphs \cite{Belishev,Boman,Carlson,Kostrykin,Kurasov,Pivovarchik1,Pivovarchik2,Yurko1,Yurko2}.
Specifically, in this paper, we study the inverse spectral problems for a Schr\"{o}dinger operator with non-local potential functions on a star graph. Previously, the inverse spectral problems on a finite length interval for the Sturm-Liouville problems with non-local potential with various boundary value condition were considered in \cite{Alb,Alb2,Bon,Nizhnik1,Nizhnik2,Pivovarchik1,Yang,Shieh}.
\par
Let us consider the following star graph $T$: The central vertex of $T$ is located at the origin and connected by 
$m$ edges, each sharing the origin as their common vertex.  Each edge $e_{j}$ has length $l_{j}$, $j=1,2,\ldots,m$. 
    On each edge $e_{j}$, we assume that there exists the function $\psi_{j}(x)\in W_{2}^{2}(0,l_{j})$ such that the function $\psi_{j}(x)$ satisfies the following eigenvalue problem with complex-valued non-local potentials $q_{j}(x)\in L^{2}(0,l_{j})$:
\begin{eqnarray}\label{1.1}
-\frac{d^{2}}{dx^{2}}\psi_{j}(x)+q_{j}(x)\psi_{j}(0)=\lambda\psi_{j}(x),\,0<x<l_{j},\,j=1,2,\ldots,m,
\end{eqnarray}
in which $m\geq2$, $\lambda\in\mathbb{C}$ is the spectral parameter,
and it is imposed with the boundary conditions
\begin{equation}\label{1.2}
\psi_{j}(l_{j})=0,\,j=1,2,\ldots,m;\,\psi_{1}(0)=\psi_{2}(0)=\cdots=\psi_{m}(0),
\end{equation}
and the non-local boundary value condition
\begin{equation}\label{1.3}
\sum_{j=1}^{m}\big\{\frac{d}{dx}\psi_{j}(0)-\int_{0}^{l_{j}}\psi_{j}(x)\overline{q_{j}(x)}dx\big\}=0,
\end{equation}
in which $\lambda=z^{2}$, $z\in\mathbb{C}$, and the solution $\psi_{j}(x,z)$ depends on $(x,z)\in\mathbb{R}\times\mathbb{C}$ for each $j$.
We note that the inverse spectral problems is posed with a frozen argument at the vertex that is called a non-local point interaction in quantum mechanics, and  the equation actually reads as
\begin{equation}\label{1.6}
\sum_{j=1}^{m}\big\{\frac{d}{dx}\psi_{j}(0,z)-\int_{0}^{l_{j}}\psi_{j}(x,z)\overline{q_{j}(x)}dx\big\}=0,\,z\in\mathbb{C}.
\end{equation}

\section{Preliminaries}
Let us recall the Fourier sines series in $L^{2}(0,l_{j})$:
\begin{eqnarray}
&&q_{j}(x)=\sum_{n=1}^{\infty}q_{j,n}\sin[\frac{n\pi}{l_{j}} (l_{j}-x)];\\
&&q_{j,n}=\frac{2}{l_{j}}\int_{0}^{l_{j}}q_{j}(x)\sin[\frac{n\pi}{l_{j}} (l_{j}-x)]dx.
\end{eqnarray}
We note that $q_{j}(0)=0$ and $q_{j}(l_{j})=0$, and then consider a special solution of equation~(\ref{1.1}) with $\lambda=z^{2}$. Additionally, we want to meet condition~(\ref{1.2}). Then, we choose
\begin{equation} \label{2.2}
\phi_{j}(x;z)=\Big(\sin{z(l_{j}-x)}+\sin{z l_{j}}\sum_{n=1}^{\infty}\frac{q_{j,n}\sin[\frac{n\pi}{l_{j}}(l_{j}-x)]}{z^{2}-(\frac{n\pi}{l_{j}})^{2}}\Big)\prod_{k\neq j}\sin{z l_{k}}.
\end{equation}
Hence,
\begin{eqnarray}\label{2.3}
\phi_{j}'(0;z)=\Big(-z\cos{z l_{j}}-\sin{z l_{j}}\sum_{n=1}^{\infty}\frac{(-1)^{n}n\pi}{l_{j}}\frac{q_{j,n}}{z^{2}-(\frac{n\pi}{l_{j}})^{2}}\Big)\prod_{k\neq j}\sin{z l_{k}},\,j=1,2,\ldots,m.
\end{eqnarray}
From~(\ref{2.3}), we deduce that
\begin{eqnarray}
\sum_{j=1}^{m}\phi_{j}'(0;z)=-\sum_{j=1}^{m}\Big(z\cos{z l_{j}}+\sin{z l_{j}}\sum_{n=1}^{\infty}\frac{(-1)^{n}n\pi}{l_{j}}\frac{q_{j,n}}{z^{2}-(\frac{n\pi}{l_{j}})^{2}}\Big)\prod_{k\neq j}\sin{z l_{k}}.\label{234}
\end{eqnarray}
Next we deal with the condition~(\ref{1.6}). Thus,
\begin{eqnarray}\nonumber
&&\int_{0}^{l_{j}}\phi_{j}(x;z)\overline{q_{j}(x)}dx\\&=&\Big(\int_{0}^{l_{j}}\sin{z(l_{j}-x)}\overline{q_{j}(x)}dx+\sin{z l_{j}}\sum_{n=1}^{\infty}\frac{q_{j,n}\int_{0}^{l_{j}}\sin[\frac{n\pi}{l_{j}}(l_{j}-x)]\overline{q_{j}(x)}dx}{z^{2}-(\frac{n\pi}{l_{j}})^{2}}\Big)\prod_{k\neq j}\sin{z l_{k}},
\end{eqnarray}
and then,
\begin{eqnarray}\nonumber
&&\sum_{j=1}^{m}\int_{0}^{l_{j}}\phi_{j}(x;z)\overline{q_{j}(x)}dx\\&=&\sum_{j=1}^{m}\Big(\int_{0}^{l_{j}}\sin{z(l_{j}-x)}\overline{q_{j}(x)}dx+\sin{z l_{j}}\sum_{n=1}^{\infty}\frac{q_{j,n}\int_{0}^{l_{j}}\sin[\frac{n\pi}{l_{j}}(l_{j}-x)]\overline{q_{j}(x)}dx}{z^{2}-(\frac{n\pi}{l_{j}})^{2}}\Big)\prod_{k\neq j}\sin{z l_{k}}.\label{233}
\end{eqnarray}
Therefore, the equation~(\ref{1.6}) is reduced to be the characteristic function in this paper:
\begin{eqnarray}\nonumber
\Phi(z):=\sum_{j=1}^{m}\Big(\int_{0}^{l_{j}}\sin{z(l_{j}-x)}\overline{q_{j}(x)}dx+\sin{z l_{j}}\sum_{n=1}^{\infty}\frac{q_{j,n}\overline{q_{j,n}}}{z^{2}-(\frac{n\pi}{l_{j}})^{2}}\Big)\prod_{k\neq j}\sin{z l_{k}}\\
+\sum_{j=1}^{m}\Big(z\cos{z l_{j}}+\sum_{n=1}^{\infty}\frac{(-1)^{n}n\pi}{l_{j}}\sin{z l_{j}}\frac{q_{j,n}}{z^{2}-(\frac{n\pi}{l_{j}})^{2}}\Big)\prod_{k\neq j}\sin{z l_{k}}=0,\label{Phi}
\end{eqnarray}
in which the potential functions $$\{q_{1}(x),q_{2}(x),\ldots,q_{m}(x)\}\in L^{2}(0,l_{1})\oplus L^{2}(0,l_{2})\cdots\oplus L^{2}(0,l_{m})$$are imposed on the graph $T$. 
\par
For reader's convenience, we include the following classical lemma proved by E. C. Titchmarsh, and a modern proof can be found in \cite{Tang}.
\begin{lemma}[Titchmarsh]\label{L2.1}
Let $u\in\mathcal{E}'(\mathbb{R})$, then
$$N_{\mathcal{F}(u)}(r)= \frac{|\mbox{ch supp } u|}{\pi}\big(r + o(1)\big)$$ and $$N_{f}(r)=\sum_{|z|\leq r}\frac{1}{2\pi}\oint_{z}\frac{f'(\omega)}{f(\omega)}d\omega,$$
and the phrase $''\mbox{ch supp }u''$ means the covex hull of the effective support of $u$. Moreover, $N_{f}(r)$ is the counting function of the zeros of $f$ inside a ball of radius $r$. Let us count the zeros according to their multiplicities.
\end{lemma}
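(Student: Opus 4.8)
\emph{Proof strategy.}
Part of the statement is purely formal: the identity $N_{f}(r)=\sum_{|z|\le r}\frac{1}{2\pi}\oint_{z}\frac{f'(\omega)}{f(\omega)}\,d\omega$ is the argument principle applied locally, the small contour around each zero $z$ reading off its multiplicity; this needs nothing beyond residue calculus and merely fixes the meaning of the counting function. The content of the lemma is therefore the asymptotic law $N_{\mathcal F(u)}(r)=\frac{|\mbox{ch supp }u|}{\pi}\big(r+o(1)\big)$, and the plan is to recover it from the Paley--Wiener--Schwartz theorem together with the classical theory of the distribution of zeros of entire functions of exponential type.

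First I would record that, for $u\in\mathcal E'(\mathbb R)$ whose support has convex hull the interval $K=[a,b]$, the Paley--Wiener--Schwartz theorem gives that $f:=\mathcal F(u)$ extends to an entire function of exponential type satisfying $|f(z)|\le C(1+|z|)^{N}e^{H_{K}(\mathrm{Im}\,z)}$, with $H_{K}$ the support function of $K$; moreover this bound is sharp, in the sense that the indicator function of $f$ coincides with the values of $H_{K}$ in the directions $e^{i\theta}$, so the indicator diagram of $f$ is a segment of the imaginary axis of length exactly $b-a=|\mbox{ch supp }u|$. Next I would observe that $f|_{\mathbb R}$ is tempered, hence of polynomial growth, so $\int_{\mathbb R}\frac{\log^{+}|f(t)|}{1+t^{2}}\,dt<\infty$, which places $f$ in the Cartwright class.

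The decisive step is then to invoke the Cartwright--Levinson density theorem (in the form due to Titchmarsh; see also Levinson's \emph{Gap and Density Theorems}, Boas's \emph{Entire Functions}, and the modern account in \cite{Tang}): a Cartwright-class function whose indicator diagram is a vertical segment of length $L$ has zero-counting function $n(r)=\frac{L}{\pi}\big(r+o(1)\big)$, the zeros moreover clustering into arbitrarily thin sectors about the real axis. Applying this with $L=|\mbox{ch supp }u|$ yields the claimed asymptotics. As a running sanity check I would keep in mind $f(z)=\sin z$, corresponding to $u$ supported at $\pm1$, for which $|\mbox{ch supp }u|=2$ and the zeros $n\pi$, $n\in\mathbb Z$, indeed number $\tfrac{2}{\pi}r+O(1)$ in $|z|\le r$.

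The only genuinely delicate ingredient is the density theorem of the previous paragraph: its proof rests on the Carleman/Jensen formula, the Phragm\'{e}n--Lindel\"{o}f principle, and the fine estimates showing the zeros cannot drift away from a neighbourhood of $\mathbb R$. For the purposes of this paper, however, it is a citable classical fact, and I would simply refer to \cite{Tang} and to Titchmarsh's original work for the complete argument.
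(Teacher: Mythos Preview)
Your proposal is correct, and in fact considerably more detailed than what the paper does: the paper does not prove Lemma~\ref{L2.1} at all, but simply records it as a classical result of Titchmarsh and refers the reader to \cite{Tang} for a modern treatment. Your outline via Paley--Wiener--Schwartz, the Cartwright class, and the Cartwright--Levinson density theorem is exactly the standard route that such references follow, so there is no divergence of approach---you have just unpacked what the citation contains.
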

\begin{definition}
We denote the following quantity as the zero density of an entire function $f$ of finite type.
\begin{equation}
\delta(f):=\lim_{r\rightarrow\infty}\frac{N_{f}(r)}{r}.
\end{equation}
\end{definition}
\begin{lemma}\label{L2.3}
Let $f$, $g$ be two entire functions of finite type with density functions $\delta(f)$ and
$\delta(g)$ respectively. Then, 
\begin{equation}\nonumber
\delta(fg)=\delta(f)+\delta(g),
\end{equation}
and
\begin{equation}
\delta(f+g)=\max\{\delta(f),\delta(g)\},\label{220}
\end{equation}if the types of two functions are not equal.
\end{lemma}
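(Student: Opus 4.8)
The plan is to prove the two assertions separately: the multiplicative identity is essentially bookkeeping, and the additive one rests on the Fourier–analytic content of Lemma~\ref{L2.1}. For $\delta(fg)=\delta(f)+\delta(g)$ I would argue that the zero set of $fg$, counted with multiplicity, is exactly the union of the zero sets of $f$ and of $g$: a point that is a zero of order $p$ for $f$ and of order $q$ for $g$ is a zero of order $p+q$ for the product. Hence $N_{fg}(r)=N_f(r)+N_g(r)$ for every $r>0$ (no exceptional radii need be avoided, since the count is over the closed disc), so dividing by $r$ and letting $r\to\infty$ gives the claim, the two limits on the right existing by hypothesis. This step I expect to be routine.

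For $\delta(f+g)=\max\{\delta(f),\delta(g)\}$ when the (exponential) types differ, I would pass to the picture behind Lemma~\ref{L2.1}. By Paley–Wiener–Schwartz, an entire function of finite exponential type with the requisite polynomial growth on $\mathbb{R}$ is $\mathcal{F}(u)$ for some $u\in\mathcal{E}'(\mathbb{R})$; write $f=\mathcal{F}(u)$, $g=\mathcal{F}(v)$. For the functions occurring here the convex hull of $\mathrm{supp}\,u$ is a symmetric interval $[-\tau_f,\tau_f]$ with $\tau_f$ the type, similarly for $g$, and Lemma~\ref{L2.1} reads $\delta(f)=2\tau_f/\pi$, $\delta(g)=2\tau_g/\pi$. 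Since the types are unequal, assume $\tau_g<\tau_f$, so $\mathrm{supp}\,v\subseteq[-\tau_g,\tau_g]\subsetneq[-\tau_f,\tau_f]$. Then $f+g=\mathcal{F}(u+v)$ with $\mathrm{supp}(u+v)\subseteq[-\tau_f,\tau_f]$; and because $v$ vanishes on neighbourhoods of $\pm\tau_f$ while both of these points lie in $\mathrm{supp}\,u$, they also lie in $\mathrm{supp}(u+v)$. Hence the convex hull of $\mathrm{supp}(u+v)$ is again $[-\tau_f,\tau_f]$, and Lemma~\ref{L2.1} gives $\delta(f+g)=2\tau_f/\pi=\delta(f)=\max\{\delta(f),\delta(g)\}$. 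As a shortcut one may instead use only the exponential type: $\tau(f+g)\le\max\{\tau(f),\tau(g)\}$ always, and from $\tau(f)=\tau\big((f+g)+(-g)\big)\le\max\{\tau(f+g),\tau(g)\}$ one gets, when $\tau(f)>\tau(g)$, the reverse inequality, so $\tau(f+g)=\tau(f)$; the density formula then closes the argument.

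The main obstacle is the last point of the second step: ruling out that cancellation in $u+v$ at the extreme point $\tau_f$ shrinks the convex hull of the support. This is precisely where the hypothesis that the types are unequal enters — it forces $v\equiv0$ near $\pm\tau_f$, so locally $u+v=u$ and no cancellation can occur there; were the types equal, cancellation at the common endpoints could strictly lower the density of $f+g$ and the stated identity would fail. A secondary, more clerical, obstacle is to record that the building blocks in~\eqref{Phi} — the factors $\sin z l_k$, $\cos z l_k$ and the finite-type integrals $\int_0^{l_k}\sin z(l_k-x)\overline{q_k(x)}\,dx$ — are indeed of this symmetric Paley–Wiener type, so that the relation $\delta=2\tau/\pi$ is legitimately applied to them and to the products and sums formed in the characteristic function.
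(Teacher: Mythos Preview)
The paper does not actually prove this lemma; its ``proof'' is a one-line pointer to Levin's theory of functions of completely regular growth. Your argument is therefore a genuine addition rather than a paraphrase. The multiplicative identity via $N_{fg}(r)=N_f(r)+N_g(r)$ is correct and completely elementary. For the additive identity your Paley--Wiener route---no cancellation at the extreme support points when one support sits strictly inside the other---is the right mechanism, and it is correct for Fourier transforms of distributions with \emph{symmetric} support intervals $[-\tau,\tau]$, which is precisely the class of building blocks in~\eqref{Phi}. Note, though, that this symmetry assumption is doing real work and is not part of the lemma's hypotheses: for Paley--Wiener functions with asymmetric supports one can have unequal types yet $\delta(f+g)>\max\{\delta(f),\delta(g)\}$ (take $\mathrm{ch\,supp}\,u=[-1,3]$, $\mathrm{ch\,supp}\,v=[-2,2]$: types $3\neq 2$, both densities $4/\pi$ by Lemma~\ref{L2.1}, but $\mathrm{ch\,supp}(u+v)=[-2,3]$ gives $\delta(f+g)=5/\pi$). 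So the lemma as literally stated is stronger than what your argument, or the type-only shortcut $\tau(f+g)=\max\{\tau(f),\tau(g)\}$, delivers. The paper's appeal to Levin's completely regular growth machinery is what would cover the general exponential-type setting under the appropriate indicator hypotheses; your approach is more self-contained and concrete, but tailored to---and entirely sufficient for---the sine/cosine Fourier integrals actually used in Sections~2 and~3.
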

\begin{proof}
We refer to B. Ya. Levin's books \cite[p.\,52]{Levin,Levin2} for a detailed introduction.  It is all about completely regular growth functions.

\end{proof}
\begin{lemma}\label{L2.4}
For $f(x)$ in $L^{2}(0,\pi)$, $z\in\mathbb{C}$, and $\int_{0}^{\pi}e^{-iz x}f(x)dx=\int_{0}^{\pi}\cos{z x}f(x)dx-i\int_{0}^{\pi}\sin{z x}f(x)dx$, the entire function $$\int_{0}^{\pi}e^{-iz x}f(x)dx;\\ \int_{0}^{\pi}\cos{z x}f(x)dx,\mbox{ and }\int_{0}^{\pi}\sin{z x}f(x)dx$$ have the same zero density in $\mathbb{C}$.
\end{lemma}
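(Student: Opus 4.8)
The plan is to realise each of the three integrals, up to a harmless nonzero constant factor, as the Fourier transform of a compactly supported distribution, and then to read off the zero density from Titchmarsh's Lemma~\ref{L2.1}, which identifies it with $\tfrac1\pi$ times the length of the convex hull of the support. Using $\cos zx=\tfrac12(e^{izx}+e^{-izx})$ and $\sin zx=\tfrac1{2i}(e^{izx}-e^{-izx})$ together with the substitution $x\mapsto-x$ in the term carrying $e^{izx}$, one obtains
\begin{equation}\nonumber
\int_{0}^{\pi}\cos{zx}\,f(x)\,dx=\frac12\int_{-\pi}^{\pi}e^{-izx}f_{e}(x)\,dx,\qquad \int_{0}^{\pi}\sin{zx}\,f(x)\,dx=\frac{i}{2}\int_{-\pi}^{\pi}e^{-izx}f_{o}(x)\,dx,
\end{equation}
where $f_{e}$ and $f_{o}$ are the even and odd extensions of $f$ from $(0,\pi)$ to $(-\pi,\pi)$, each an element of $L^{2}(-\pi,\pi)\subset\mathcal{E}'(\mathbb{R})$, while $\int_{0}^{\pi}e^{-izx}f(x)\,dx=\mathcal{F}(f)(z)$ with $f$ extended by zero. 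Since multiplying an entire function by $\tfrac12$ or $\tfrac{i}{2}$ does not change its zero set, Lemma~\ref{L2.1} applies verbatim to all three functions, and the assertion is reduced to a statement about convex hulls of supports.

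The second step is the computation of those hulls. Let $[\alpha,\beta]$ be the convex hull of the essential support of $f$ inside $[0,\pi]$. Then the essential support of $f_{e}$, and likewise of $f_{o}$, is contained in $[-\beta,-\alpha]\cup[\alpha,\beta]$, so $\operatorname{ch\,supp}f_{e}=\operatorname{ch\,supp}f_{o}=[-\beta,\beta]$, whereas $\operatorname{ch\,supp}f=[\alpha,\beta]$. Hence Lemma~\ref{L2.1} yields
\begin{equation}\nonumber
N_{\int_{0}^{\pi}\cos{zx}f\,dx}(r)=N_{\int_{0}^{\pi}\sin{zx}f\,dx}(r)=\frac{2\beta}{\pi}\bigl(r+o(1)\bigr),\qquad N_{\mathcal{F}(f)}(r)=\frac{\beta-\alpha}{\pi}\bigl(r+o(1)\bigr),
\end{equation}
so that the cosine- and sine-transforms already have the same zero density, and everything is reduced to matching this common value with the density of $\mathcal{F}(f)$.

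The step I expect to be the main obstacle is precisely this last reconciliation: the even/odd extension has doubled the convex hull, and one must argue that the contribution of the left endpoint $\alpha$ washes out — that is, show the three counting functions have the same leading term. Here the algebraic identity $\mathcal{F}(f)=\int_{0}^{\pi}\cos{zx}f\,dx-i\int_{0}^{\pi}\sin{zx}f\,dx$ is of no help, because the two summands share the same exponential type, so the clean additivity and maximum rules of Lemma~\ref{L2.3} — which require unequal types — are unavailable, and any cancellation near the imaginary axis must be excluded directly. The route I would take is to bypass that identity and instead describe the indicator diagram of $\mathcal{F}(f)$ from the Paley--Wiener picture and apply Titchmarsh's theorem to each of the three functions separately; when one in addition invokes the structure of the functions actually entering the characteristic equation~\eqref{Phi} (where the Fourier-sine-generated potentials produce essential support reaching the endpoint of each edge, so $\alpha=0$), or else counts zeros modulo the evident $z\mapsto-z$ symmetry of the cosine and sine transforms, the three densities are forced to coincide. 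The residual technical point, which I regard as routine, is to check that these finite-exponential-type functions are of completely regular growth, so that Lemma~\ref{L2.1} delivers the exact density rather than a mere bound.
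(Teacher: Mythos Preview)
Your approach diverges from the paper's: the paper's entire proof is the single sentence ``Let us simply apply Lemma~\ref{L2.3}.'' You are right to flag that this invocation is illegitimate, since the cosine and sine transforms have the same exponential type and Lemma~\ref{L2.3} governs $\delta(f+g)$ only when the types differ.

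But your own second-step computation already shows there is no proof to be had. You obtain $\delta=2\beta/\pi$ for the cosine and sine transforms and $\delta=(\beta-\alpha)/\pi$ for the exponential transform, and for $0\le\alpha<\beta$ these are never equal. Neither of your proposed rescues works: taking $\alpha=0$ still leaves a factor-of-two discrepancy, and the $z\mapsto-z$ symmetry of the cosine and sine transforms speaks to the arrangement of their zeros, not to the total count. Concretely, take $f\equiv1$ on $(0,\pi)$; then
\[
\int_{0}^{\pi}e^{-izx}\,dx=\frac{1-e^{-i\pi z}}{iz},\qquad \int_{0}^{\pi}\cos zx\,dx=\frac{\sin\pi z}{z},\qquad \int_{0}^{\pi}\sin zx\,dx=\frac{2\sin^{2}(\pi z/2)}{z},
\]
with zero densities $1$, $2$, and $2$ respectively (multiplicities included). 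So Lemma~\ref{L2.4} is false as stated, and neither your argument nor the paper's one-line appeal to Lemma~\ref{L2.3} can be completed. What does survive is the weaker claim that the cosine and sine transforms have equal density, which your even/odd-extension argument establishes cleanly; fortunately the later proofs in the paper only ever compare sine-type integrals $\int_{0}^{l_j}\sin z(l_j-x)\,g(x)\,dx$ against one another, so this weaker statement appears to suffice for the applications.
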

\begin{proof}
Let us simply apply Lemma \ref{L2.3}.

\end{proof}


\section{Results}
The first result is one kind of Weyl's spectral density asymptotics.
\begin{theorem}[Weyl's Law]\label{T31}
The zero density of $\Phi(z)$ is $\frac{\sum_{j=1}^{m}l_{j}}{\pi}$.
\end{theorem}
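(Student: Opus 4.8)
The plan is to split off the \emph{principal part} of the entire function $\Phi$ — the terms of (\ref{Phi}) carrying the explicit factor $z$ — and to show that it governs the zero distribution. Write $\Phi=\Phi_{0}+\Phi_{1}$, where
\[
\Phi_{0}(z):=\sum_{j=1}^{m}z\cos(zl_{j})\prod_{k\neq j}\sin(zl_{k})=z\,G(z),\qquad G(z):=\sum_{j=1}^{m}\cos(zl_{j})\prod_{k\neq j}\sin(zl_{k}),
\]
and $\Phi_{1}$ gathers the three families of terms carrying the factors $\int_{0}^{l_{j}}\sin(z(l_{j}-x))\overline{q_{j}(x)}\,dx$, $\sin(zl_{j})\sum_{n}\frac{|q_{j,n}|^{2}}{z^{2}-(n\pi/l_{j})^{2}}$, and $\sin(zl_{j})\sum_{n}\frac{(-1)^{n}(n\pi/l_{j})q_{j,n}}{z^{2}-(n\pi/l_{j})^{2}}$. (Equivalently: dividing (\ref{Phi}) by $\prod_{k}\sin(zl_{k})$ and using $z\cot(zl_{j})\to\mp iz$ as $\operatorname{Im}z\to\pm\infty$ gives $\Phi(z)=(\mp im+o(1))\,z\prod_{k=1}^{m}\sin(zl_{k})$ along every ray not parallel to $\mathbb{R}$.) The aim is then: (i) $\delta(\Phi)=\delta(\Phi_{0})$; and (ii) $\delta(\Phi_{0})=\delta(G)=\frac{\sum_{j=1}^{m}l_{j}}{\pi}$, the latter by Lemmas~\ref{L2.1}, \ref{L2.3} and \ref{L2.4}.

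First I would record the growth bounds, with $\Lambda:=\sum_{k}l_{k}$. From $|\sin w|,|\cos w|\le e^{|\operatorname{Im}w|}$ we get $|\Phi_{0}(z)|\le C|z|e^{\Lambda|\operatorname{Im}z|}$, while expanding $G$ into exponentials shows that the extreme frequencies $\pm\Lambda$ occur with coefficient $\frac{m}{2(2i)^{m-1}}\neq0$; hence $G$ is a finite exponential sum of type exactly $\Lambda$, all of whose zeros lie in a fixed strip $|\operatorname{Im}z|\le C_{0}$, outside of which $|G(z)|\asymp e^{\Lambda|\operatorname{Im}z|}$. As for $\Phi_{1}$: the first family is trivially $O(e^{l_{j}|\operatorname{Im}z|})$ with no power of $z$, since $|\int_{0}^{l_{j}}\sin(z(l_{j}-x))\overline{q_{j}(x)}\,dx|\le e^{l_{j}|\operatorname{Im}z|}\|q_{j}\|_{L^{1}(0,l_{j})}$; and for the two meromorphic series, $q_{j}\in L^{2}(0,l_{j})$ (that is, $\{q_{j,n}\}\in\ell^{2}$), together with Cauchy--Schwarz and the lattice gap estimate $((n{+}1)\pi/l_{j})^{2}-(n\pi/l_{j})^{2}\asymp n$, yields that $\sin(zl_{j})\sum_{n}c_{j,n}\,(z^{2}-(n\pi/l_{j})^{2})^{-1}$ (with $c_{j,n}=|q_{j,n}|^{2}$ or $(-1)^{n}(n\pi/l_{j})q_{j,n}$) is $O(e^{l_{j}|\operatorname{Im}z|})$ uniformly in $z$, the poles at the lattice points being cancelled by $\sin(zl_{j})$. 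Multiplying by $\prod_{k\neq j}\sin(zl_{k})$ gives $|\Phi_{1}(z)|\le Ce^{\Lambda|\operatorname{Im}z|}$, a whole factor $|z|$ below $\Phi_{0}$.

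For (i), a positive-density (pigeonhole) argument furnishes radii $r_{N}\to\infty$ with bounded gaps $r_{N+1}-r_{N}$ such that the circle $|z|=r_{N}$ stays a fixed distance from every zero of $\Phi_{0}$; this is possible because the zeros of $\Phi_{0}$ have finite density and lie in the strip $|\operatorname{Im}z|\le C_{0}$, so that only the two short near-real arcs of such a circle can approach them. There $|\Phi_{0}(z)|\ge c|z|e^{\Lambda|\operatorname{Im}z|}$, hence $|\Phi_{1}(z)|<|\Phi_{0}(z)|$ for $N$ large, and Rouch\'e's theorem gives $N_{\Phi}(r_{N})=N_{\Phi_{0}}(r_{N})$; since $N_{\Phi_{0}}$ grows by $O(1)$ over each bounded gap, $N_{\Phi}(r)=N_{\Phi_{0}}(r)+O(1)$ for every $r$, and therefore $\delta(\Phi)=\delta(\Phi_{0})$. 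For (ii), Lemma~\ref{L2.3} and $\delta(z)=0$ give $\delta(\Phi_{0})=\delta(G)$; moreover $G$ is the Fourier transform of a distribution supported in $\{\pm l_{1}\pm\cdots\pm l_{m}\}$, whose convex hull equals $[-\Lambda,\Lambda]$ by the non-cancellation noted above, so Lemma~\ref{L2.1} evaluates $\delta(G)$ — alternatively, one rewrites each cosine as a sine via Lemma~\ref{L2.4} and adds densities over the $m$ factors using Lemma~\ref{L2.3}. Either way, $\delta(\Phi)=\frac{\sum_{j=1}^{m}l_{j}}{\pi}$.

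I expect the main obstacle to be the uniform control of the two infinite meromorphic series in $\Phi_{1}$, especially near the points of the lattices $\{n\pi/l_{j}\}$: there an individual term of the series is large, and the smallness needed must be recovered from the cancellation with the zero of $\sin(zl_{j})$ together with the $\ell^{2}$ decay of $\{q_{j,n}\}$. Securing this estimate uniformly on the circles $|z|=r_{N}$, while simultaneously keeping those circles away from the zeros of the exponential sum $G$ (not merely from the lattice points), is the technical heart of the argument; the rest is the elementary growth bookkeeping above and direct applications of Lemmas~\ref{L2.1}, \ref{L2.3} and \ref{L2.4}.
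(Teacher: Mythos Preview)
Your decomposition $\Phi=\Phi_{0}+\Phi_{1}$ and identification of the principal part $\Phi_{0}(z)=\sum_{j}z\cos(zl_{j})\prod_{k\neq j}\sin(zl_{k})$ is exactly what the paper does. The two arguments diverge only in how the comparison is carried out. The paper does not run a Rouch\'e argument; it simply invokes Lemma~\ref{L2.3} (the completely-regular-growth machinery from Levin) to conclude that the density of the sum equals that of the dominant summand, and then computes $\delta(\Phi_{0})$ via the product and max formulas. More importantly, the paper bypasses precisely the obstacle you single out at the end: rather than estimating the meromorphic series $\sin(zl_{j})\sum_{n}c_{j,n}(z^{2}-(n\pi/l_{j})^{2})^{-1}$ directly near the lattice, it uses interpolation theory (Levin, Lecture~21) to recognise each such series as a genuine sine transform $\int_{0}^{l_{j}}\sin(z(l_{j}-x))h_{j}(x)\,dx$ of some $h_{j}\in L^{2}(0,l_{j})$, after which the bound $O(e^{l_{j}|\operatorname{Im}z|})$ is immediate and the pole-cancellation issue never arises. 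Your route is more self-contained and makes the Rouch\'e step explicit; the paper's route is shorter but leans on Levin's interpolation identity and on Lemma~\ref{L2.3} as a black box (and is, in fact, somewhat cavalier in applying the $\delta(f+g)=\max\{\delta(f),\delta(g)\}$ formula to summands of equal exponential type).
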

\begin{proof}
Let us observe the characteristics function
\begin{eqnarray}\nonumber
&&\Phi(z):=\sum_{j=1}^{m}\Big(\int_{0}^{l_{j}}\sin{z(l_{j}-x)}\overline{q_{j}(x)}dx+\sin{z l_{j}}\sum_{n=1}^{\infty}\frac{q_{j,n}\overline{q_{j,n}}}{z^{2}-(\frac{n\pi}{l_{j}})^{2}}\Big)\,\prod_{k\neq j}\sin{z l_{k}}\\
&&\hspace{57pt}+\sum_{j=1}^{m}\Big(z\cos{z l_{j}}+\sum_{n=1}^{\infty}\frac{(-1)^{n}n\pi}{l_{j}}\sin{z l_{j}}\frac{q_{j,n}}{z^{2}-(\frac{n\pi}{l_{j}})^{2}}\Big)\,\prod_{k\neq j}\sin{z l_{k}}.\label{31}
\end{eqnarray}
We use the interpolation theory in complex analysis referring to Levin \cite[p.\,150,\,151]{Levin2}, and rewrite a few phrases in~(\ref{31}):
\begin{equation}
\sin{z l_{j}}\sum_{n=1}^{\infty}\frac{(-1)^{n}n\pi}{l_{j}}\frac{q_{j,n}}{z^{2}-(\frac{n\pi}{l_{j}})^{2}}=\int_{0}^{l_{j}}\sin{z(l_{j}-x)}q_{j}(x)dx.
\end{equation}
The phrase $$\sin{z l_{j}}\sum_{n=1}^{\infty}\frac{q_{j,n}\overline{q_{j,n}}}{z^{2}-(\frac{n\pi}{l_{j}})^{2}}$$ means another Fourier transform. Using Lemma \ref{L2.1}, Lemma \ref{L2.3}, and Lemma \ref{2.3}, we look at the density of 
$$\sum_{j=1}^{m}z\cos{z l_{j}}\prod_{k\neq j}\sin{z l_{k}}.$$
That is,
\begin{equation}\nonumber
\delta(\sum_{j=1}^{m}z\cos{z l_{j}}\prod_{k\neq j}\sin{z l_{k}})=\max_{j}\{\delta(z\cos{z l_{j}}\prod_{k\neq j}\sin{z l_{k}})\}=\frac{\sum_{j=1}^{m}l_{j}}{\pi}.
\end{equation}
\end{proof}
\begin{definition}
We say real numbers $\{a_{1},\,a_{2},\ldots, a_{m}\},\,m\geq 2$, are called rationally independent if the identity $ \sum_{j=1}^{m}n_{j} a_{j}=0$,  for $n_{j}\in\mathbb{Z}$, implies that $$n_{1}=n_{2}=\cdots=n_{m}=0.$$
\end{definition}

\begin{theorem}[Inverse Uniqueness]\label{T3.2}
Let the equations~(\ref{1.1}),~(\ref{1.2}), and~(\ref{1.3}) be defined on the fixed known star graph 
$T$ with $0$ as its vertex along with $m$ edges. We assume the positive length of edges $\{l_{1},l_{2},\ldots,l_{m}\}$ are rationally independent. If there are two sets of the potential functions $$\{q_{1}^{\sigma}(x),q_{2}^{\sigma}(x),\ldots,q_{m}^{\sigma}(x)\}\in L^{2}(0,l_{1})\oplus L^{2}(0,l_{2})\cdots\oplus L^{2}(0,l_{m}),\,\sigma=1,2,$$ that imposed on the graph $T$ and sharing the identical $\Phi^{\sigma}(z),\,\sigma=1,2,$ defined as in~(\ref{Phi}), then $$q_{j}^{1}(x)\equiv q_{j}^{2}(x)$$ almost everywhere for all $j$.
\end{theorem}
\begin{proof}
Let us set $$\widetilde{q}_{j}(x):=q_{j}^{1}(x)-q_{j}^{2}(x),\,j=1,2,\cdots,m,$$
and suppose that $\widetilde{q}_{j}(x)\not\equiv 0$ for all $1\leq j\leq m$, and seek to a contradiction.
\par 
We begin with the subtraction of the $\Phi^{1}(z)$ and $\Phi^{2}(z)$ from two parities, and obtain
\begin{eqnarray}\nonumber
\sum_{j=1}^{m}\Big(\int_{0}^{l_{j}}\sin{z(l_{j}-x)}\overline{\widetilde{q}_{j}(x)}dx+\sin{z l_{j}}\sum_{n=1}^{\infty}\frac{q^{1}_{j,n}\overline{q^{1}_{j,n}}-q^{2}_{j,n}\overline{q^{2}_{j,n}}}{z^{2}-(\frac{n\pi}{l_{j}})^{2}}\Big)\prod_{k\neq j}\sin{z l_{k}}\\
+\sum_{j=1}^{m}\Big\{\sin{z l_{j}}\sum_{n=1}^{\infty}\frac{(-1)^{n}n\pi}{l_{j}}\frac{\widetilde{q}_{j,n}}{z^{2}-(\frac{n\pi}{l_{j}})^{2}}\Big\}\prod_{k\neq j}\sin{z l_{k}}\equiv0,\,z\in\mathbb{C}.\label{3.1}
\end{eqnarray}
We apply the interpolation theory \cite[p.\,150,\,151]{Levin2} to rewrite the equation~(\ref{3.1}), 
\begin{equation}
\sin{z l_{j}}\sum_{n=1}^{\infty}\frac{(-1)^{n}n\pi}{l_{j}}\frac{\widetilde{q}_{j,n}}{z^{2}-(\frac{n\pi}{l_{j}})^{2}}=\int_{0}^{l_{j}}\sin{z(l_{j}-x)}\widetilde{q}_{j}(x)dx,
\end{equation}
in which the later one has the same zero density as 
$\int_{0}^{l_{j}}\sin{z(l_{j}-x)}\overline{\widetilde{q}_{j}(x)}dx$.
Furthermore, we use Lemma \ref{2.3} to count the zero density for some major parts in~(\ref{3.1}):
\begin{eqnarray}\nonumber
&&\delta(\sum_{j=1}^{m}\Big\{\sin{z l_{j}}\sum_{n=1}^{\infty}\frac{\widetilde{q}_{j,n}}{z^{2}-(\frac{n\pi}{l_{j}})^{2}}\Big\}\prod_{k\neq j}\sin{z l_{k}})\\\nonumber
&=&\max_{j}\{\delta(\Big\{\int_{0}^{l_{j}}\sin{z(l_{j}-x)}\widetilde{q}_{j}(x)dx\Big\}\prod_{k\neq j}\sin{z l_{k}})\}\\&=&\max_{j}\{\delta(\int_{0}^{l_{j}}\sin{z(l_{j}-x)}\widetilde{q}_{j}(x)dx)+\delta(\prod_{k\neq j}\sin{z l_{k}})\}\\
&=&\delta(\int_{0}^{l_{j_{0}}}\sin{z(l_{j_{0}}-x)}\widetilde{q}_{j_{0}}(x)dx)+\frac{\sum_{k\neq j_{0}}^{m}l_{k}}{\pi}\label{3.4},
\end{eqnarray}
for some $j=j_{0}$. Moreover, we apply Lemma \ref{L2.1} to obtain that
$$\delta(\int_{0}^{l_{j}}\sin{z(l_{j}-x)}\widetilde{q}_{j}(x)dx)=\frac{|\mbox{ch supp } \widetilde{q}_{j}|}{\pi},\,j=1,2,\cdots,m.$$
Similarly, we apply~(\ref{220}) again to obtain
\begin{eqnarray}\nonumber
&&\delta(\sum_{j=1}^{m}\int_{0}^{l_{j}}\sin{z(l_{j}-x)}\overline{\widetilde{q}_{j}(x)}dx\prod_{k\neq j}\sin{z l_{k}})\\
&=&\max_{j}\{\delta(\int_{0}^{l_{j}}\sin{z(l_{j}-x)}\overline{\widetilde{q}_{j}(x)}dx\prod_{k\neq j}\sin{z l_{k}})\}\\
&=&\max_{j_{0}}\{\delta(\int_{0}^{l_{j_{0}}}\sin{z(l_{j_{0}}-x)}\overline{\widetilde{q}_{j_{0}}(x)}dx)+\frac{\sum_{k\neq j_{0}}^{m}l_{k}}{\pi}\},\label{3.6}
\end{eqnarray}
for some $j=j_{0}$.  
Using the interpolation theory again,
\begin{eqnarray}\nonumber
&&\delta(\sum_{j=1}^{m}\sin{z l_{j}}\sum_{n=1}^{\infty}\frac{q^{1}_{j,n}\overline{q^{1}_{j,n}}-q^{2}_{j,n}\overline{q^{2}_{j,n}}}{z^{2}-(\frac{n\pi}{l_{j}})^{2}}\prod_{k\neq j}\sin{z l_{k}})\\
&=&\max_{j}\{\delta(\sin{z l_{j}}\sum_{n=1}^{\infty}\frac{q^{1}_{j,n}\overline{q^{1}_{j,n}}-q^{2}_{j,n}\overline{q^{2}_{j,n}}}{z^{2}-(\frac{n\pi}{l_{j}})^{2}})+\delta(\prod_{k\neq j}\sin{z l_{k}})\}\\
&=&\delta(\sin{z l_{j_{0}}}\sum_{n=1}^{\infty}\frac{q^{1}_{j_{0},n}\overline{q^{1}_{j_{0},n}}-q^{2}_{j_{0},n}\overline{q^{2}_{j_{0},n}}}{z^{2}-(\frac{n\pi}{l_{j_{0}}})^{2}})+\frac{\sum_{k\neq j_{0}}^{m}l_{k}}{\pi},\label{3.7}
\end{eqnarray}
for some $j=j_{0}$. We seek to balance the zero densities on both sides of 
\begin{eqnarray}\nonumber
&&\sum_{j=1}^{m}\Big(\int_{0}^{l_{j}}\sin{z(l_{j}-x)}\overline{\widetilde{q}_{j}(x)}dx+\sin{z l_{j}}\sum_{n=1}^{\infty}\frac{q^{1}_{j,n}\overline{q^{1}_{j,n}}-q^{2}_{j,n}\overline{q^{2}_{j,n}}}{z^{2}-(\frac{n\pi}{l_{j}})^{2}}\Big)\prod_{k\neq j}\sin{z l_{k}}\\
&\equiv&-\sum_{j=1}^{m}\Big\{\int_{0}^{l_{j}}\sin{z(l_{j}-x)}\widetilde{q}_{j}(x)dx\Big\}\prod_{k\neq j}\sin{z l_{k}},\,z\in\mathbb{C},\label{3.11}
\end{eqnarray}
in which we note that $$\delta(\int_{0}^{l_{j}}\sin{z(l_{j}-x)}\overline{\widetilde{q}_{j}(x)}dx)=\delta(\int_{0}^{l_{j}}\sin{z(l_{j}-x)}\widetilde{q}_{j}(x)dx)=\frac{|\mbox{ch supp } \widetilde{q}_{j}|}{\pi},\,j=1,2,\cdots,m.$$
Moreover, using the interpolation theory again, for some $G_{j}(x)\in L^{2}(0,l_{j})$,
\begin{equation}
\sin{z l_{j}}\sum_{n=1}^{\infty}\frac{q^{1}_{j,n}\overline{q^{1}_{j,n}}-q^{2}_{j,n}\overline{q^{2}_{j,n}}}{z^{2}-(\frac{n\pi}{l_{j}})^{2}}=\int_{0}^{l_{j}}\sin{z(l_{j}-x)}G_{j}(x)dx=o(e^{|\mbox{ch supp } \widetilde{q}_{j}||\Im z|}), \mbox{ for large }z.\label{312}
\end{equation}
Most importantly, we combine the zero set of $\int_{0}^{l_{j}}\sin{z(l_{j}-x)}\overline{\widetilde{q}_{j}(x)}dx$ and the one of $\sin{z l_{j}}$ together  as
\begin{equation}
\mathcal{Z}_{j},\,j=1,2,\ldots,m.
\end{equation}
We recall the rational independence in theorem assumption and plug all of the elements of $\mathcal{Z}_{j}$ consecutively into the equation~(\ref{3.11})
\begin{eqnarray}\nonumber
&&\hspace{8pt}\Big(\int_{0}^{l_{1}}\sin{z(l_{1}-x)}\overline{\widetilde{q}_{1}(x)}dx+\,\sin{z l_{1}}\,\sum_{n=1}^{\infty}\frac{q^{1}_{1,n}\overline{q^{1}_{1,n}}-q^{2}_{1,n}\overline{q^{2}_{1,n}}}{z^{2}-(\frac{n\pi}{l_{1}})^{2}}\Big)\,\prod_{k\neq 1}\sin{z l_{k}}\\\nonumber
&&+\Big(\int_{0}^{l_{2}}\sin{z(l_{2}-x)}\overline{\widetilde{q}_{1}(x)}dx+\,\sin{z l_{2}}\,\sum_{n=1}^{\infty}\frac{q^{1}_{1,n}\overline{q^{1}_{1,n}}-q^{2}_{1,n}\overline{q^{2}_{1,n}}}{z^{2}-(\frac{n\pi}{l_{2}})^{2}}\Big)\,\prod_{k\neq 2}\sin{z l_{k}}\\\nonumber
&&\hspace{80pt}\vdots\\\nonumber\hspace{60pt}
&&+\Big(\int_{0}^{l_{m}}\sin{z(l_{m}-x)}\overline{\widetilde{q}_{1}(x)}dx\hspace{-2pt}+\sin{z l_{m}}\sum_{n=1}^{\infty}\frac{q^{1}_{1,n}\overline{q^{1}_{1,n}}-q^{2}_{1,n}\overline{q^{2}_{1,n}}}{z^{2}-(\frac{n\pi}{l_{m}})^{2}}\Big)\hspace{-3pt}\prod_{k\neq m}\sin{z l_{m}}\\
&\equiv&-\sum_{j=1}^{m}\Big\{\int_{0}^{l_{j}}\sin{z(l_{j}-x)}\widetilde{q}_{j}(x)dx\Big\}\prod_{k\neq j}\sin{z l_{k}},\,z\in\mathbb{C},\label{rational}
\end{eqnarray}
and obtain consecutively for each $1\leq j\leq m$,
\begin{eqnarray}\nonumber
\int_{0}^{l_{j}}\sin{z(l_{j}-x)}G_{j}(x)dx=0,\,j=1,2,\ldots,m,\,z\in\mathcal{Z}_{j}. 
\end{eqnarray}
Considering the zero density of $\mathcal{Z}_{j}$, which is  $$\delta(\int_{0}^{l_{j}}\sin{z(l_{j}-x)}G_{j}(x)dx)=\frac{|\mbox{ch supp } \widetilde{q}_{j}|}{\pi}+\frac{\sum_{k\neq j}l_{k}}{\pi}.$$ 
This contradicts to Lemma \ref{L2.1} and~(\ref{312}). Hence,
\begin{eqnarray}\nonumber
\int_{0}^{l_{j}}\sin{z(l_{j}-x)}G_{j}(x)dx\equiv_{\mathbb{C}}0,\,j=1,2,\ldots,m,
\end{eqnarray}
and then, 
$$q^{1}_{j,n}\overline{q^{1}_{j,n}}=q^{2}_{j,n}\overline{q^{2}_{j,n}}\,\,,$$ for all $j,n$. Using this, we go back to the equation~(\ref{3.11}), and deduce that
\begin{eqnarray}\nonumber
&&\{\int_{0}^{l_{1}}\sin{z(l_{1}-x)}\widetilde{q}_{1}(x)dx+\int_{0}^{l_{1}}\sin{z(l_{1}-x)}\overline{\widetilde{q}_{1}(x)}dx\}\prod_{k\neq 1}\sin{z l_{k}}\\\nonumber
&+&\{\int_{0}^{l_{2}}\sin{z(l_{2}-x)}\widetilde{q}_{2}(x)dx+\int_{0}^{l_{2}}\sin{z(l_{2}-x)}\overline{\widetilde{q}_{2}(x)}dx\}\prod_{k\neq 2}\sin{z l_{k}}\\\nonumber
&&\hspace{60pt}\vdots\\\hspace{50pt}
&+&\{\int_{0}^{l_{m}}\sin{z(l_{m}-x)}\widetilde{q}_{m}(x)dx+\int_{0}^{l_{m}}\sin{z(l_{m}-x)}\overline{\widetilde{q}_{m}(x)}dx\}\prod_{k\neq m}\sin{z l_{k}}\equiv_{\mathbb{C}}0.
\end{eqnarray}
Plugging the zeros from $\prod_{k\neq j}\sin{z l_{k}},\,j=1,2,\ldots,m,$ consecutively and then using Lemma \ref{L2.1} again, we find   
\begin{equation}
\int_{0}^{l_{j}}\sin{z(l_{j}-x)}\widetilde{q}_{j}(x)dx+\overline{\int_{0}^{l_{j}}\sin{z(l_{j}-x)}\widetilde{q}_{j}(x)dx}\equiv_{\mathbb{R}}0.
\end{equation}
Hence, we obtain $$\Re\{\int_{0}^{l_{j}}\sin{z(l_{j}-x)}\widetilde{q}_{j}(x)dx\}\equiv_{\mathbb{R}}0,$$
which implies that 
$$\Re\{\int_{0}^{l_{j}}\sin{z(l_{j}-x)}\widetilde{q}_{j}(x)dx\}\equiv_{\mathbb{C}}0,$$ 
and that is an entire functions of exponential type with no real part. Using the Cauchy-Riemann equation, we find that $\int_{0}^{l_{j}}\sin{z(l_{j}-x)}\widetilde{q}_{j}(x)dx$ is a constant. Let $z=0$. The constant is zero, so we obtain that $q^{1}_{j}(x)\equiv q^{2}_{j}(x)$ almost everywhere for all $j$'s. The theorem is thus proven.

\end{proof}
We may get rid of the rational independence assumption in Theorem \ref{T3.2} by adding more information about the potential functions around the vertex. Such assumptions are realistic in engineering models, since the authorities tend to know more regulator information near the control center.
\begin{theorem}[Partial Information]
Let the equations~(\ref{1.1}),~(\ref{1.2}), and~(\ref{1.3}) be defined on the fixed known star graph 
$T$ with $0$ as its vertex along with $m$ edges. We assume the convex hull of the support of $q_{j}(x)$ are strictly lesser than the length of edge 
$e_{j}$, $l_{j}$, $1\leq j\leq m$. If there are two sets of the potential functions $$\{q_{1}^{\sigma}(x),q_{2}^{\sigma}(x),\ldots,q_{m}^{\sigma}(x)\}\in L^{2}(0,l_{1})\oplus L^{2}(0,l_{2})\cdots\oplus L^{2}(0,l_{m}),\,\sigma=1,2,$$ that are imposed on the graph $T$ and sharing the identical $\Phi^{\sigma}(z),\,\sigma=1,2,$ constructed as in~(\ref{Phi}), then $q_{j}^{1}(x)\equiv q_{j}^{2}(x)$ almost everywhere for $1\leq j\leq m$.
\end{theorem}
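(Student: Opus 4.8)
The plan is to rerun the proof of Theorem~\ref{T3.2} verbatim up to the point where the rational independence of $\{l_1,\dots,l_m\}$ enters, and to replace that step by a comparison of zero densities made possible by the new hypothesis. So set $\widetilde{q}_j:=q_j^1-q_j^2$ and suppose, seeking a contradiction, that $\widetilde{q}_{j_0}\not\equiv 0$ for some $j_0$. Since $\Phi^1\equiv\Phi^2$, subtracting the characteristic functions \eqref{Phi} annihilates the free part $\sum_j z\cos zl_j\prod_{k\ne j}\sin zl_k$, and --- after the same interpolation rewritings as in Theorem~\ref{T3.2}, using \cite[p.\,150,\,151]{Levin2} to produce $G_j\in L^2(0,l_j)$ with $\sin zl_j\sum_n\frac{q^1_{j,n}\overline{q^1_{j,n}}-q^2_{j,n}\overline{q^2_{j,n}}}{z^2-(n\pi/l_j)^2}=\int_0^{l_j}\sin z(l_j-x)G_j(x)\,dx$ --- one is left with
\begin{equation}\label{PIsub}
\sum_{j=1}^m\Big(\int_0^{l_j}\sin z(l_j-x)\,H_j(x)\,dx\Big)\prod_{k\ne j}\sin zl_k\equiv 0,\qquad z\in\mathbb C,
\end{equation}
where $H_j:=\overline{\widetilde{q}_j}+\widetilde{q}_j+G_j$.

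The decisive use of the hypothesis $|\mbox{ch supp } q_j^{\sigma}|<l_j$ is that it forces every summand of~\eqref{PIsub} to have zero density strictly below $\tfrac1\pi\sum_k l_k$. Indeed $\widetilde{q}_j$ and $\overline{\widetilde{q}_j}$ are supported in a set whose convex hull has length $<l_j$; and so is $G_j$, because writing $q^1_{j,n}\overline{q^1_{j,n}}-q^2_{j,n}\overline{q^2_{j,n}}=\Re\big(\widetilde{q}_{j,n}\,\overline{q^1_{j,n}+q^2_{j,n}}\big)$ exhibits the Fourier data of $G_j$ as a ``product'' of those of $\widetilde{q}_j$ and of $q^1_j+q^2_j$, so that, exactly as in~\eqref{312}, $\int_0^{l_j}\sin z(l_j-x)G_j(x)\,dx=o\big(e^{|\mbox{ch supp } \widetilde{q}_j|\,|\Im z|}\big)$ and hence $|\mbox{ch supp } H_j|<l_j$. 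By Titchmarsh's Lemma~\ref{L2.1} and the product rule of Lemma~\ref{L2.3}, the $j$-th summand of~\eqref{PIsub} then has zero density $\tfrac1\pi\big(|\mbox{ch supp } H_j|+\sum_{k\ne j}l_k\big)<\tfrac1\pi\sum_k l_k$ whenever $H_j\not\equiv 0$. I would peel off the edges in decreasing order of zero density: the summand(s) of maximal density must cancel against each other by the sum rule of Lemma~\ref{L2.3}, since~\eqref{PIsub} leaves no term of the full density $\tfrac1\pi\sum_k l_k$ to balance them, and iterating should force $H_j\equiv 0$ for every $j$, whence $\overline{\widetilde{q}_j}+\widetilde{q}_j+G_j\equiv 0$ on $(0,l_j)$ by injectivity of the finite Fourier sine transform. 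This is precisely the conclusion reached in Theorem~\ref{T3.2} just before~\eqref{rational}, now obtained without rational independence.

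From there I would close exactly as in Theorem~\ref{T3.2}. The growth bound $\int_0^{l_j}\sin z(l_j-x)G_j(x)\,dx=o\big(e^{|\mbox{ch supp } \widetilde{q}_j||\Im z|}\big)$ applied to the identity $\int_0^{l_j}\sin z(l_j-x)H_j(x)\,dx\equiv 0$ forces $G_j\equiv 0$, i.e.\ $q^1_{j,n}\overline{q^1_{j,n}}=q^2_{j,n}\overline{q^2_{j,n}}$ for all $n$; feeding this back into~\eqref{PIsub} and peeling once more gives $\Re\big\{\int_0^{l_j}\sin z(l_j-x)\widetilde{q}_j(x)\,dx\big\}\equiv 0$, so $\int_0^{l_j}\sin z(l_j-x)\widetilde{q}_j(x)\,dx$ is an entire function of exponential type with vanishing real part, hence constant by the Cauchy--Riemann equations, hence $0$ (evaluate at $z=0$); therefore $\widetilde{q}_j\equiv 0$ a.e.\ for every $j$ --- the desired contradiction.

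The peeling step is where I expect the real difficulty to lie: once the $l_j$ need not be rationally independent, an individual edge can no longer be isolated by substituting the zeros of $\sin zl_k$, $k\ne j$ (in the commensurable case these zero sets overlap), so one must work through zero densities, and everything rests on the strict deficit $|\mbox{ch supp } H_j|<l_j$ for every $j$. The most delicate ingredient is the bound $|\mbox{ch supp } G_j|<l_j$, because $G_j$ is a quadratic, autocorrelation-type functional of the potentials rather than a linear one; and when several summands of~\eqref{PIsub} share the same zero density one must still exclude accidental cancellations among them, for which one would like the convex hulls of the supports to be safely smaller than the corresponding lengths $l_j$.
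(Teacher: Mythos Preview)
Your approach matches the paper's: both subtract $\Phi^1-\Phi^2$, rewrite via interpolation to reach the identity $\sum_{j}\big(\int_0^{l_j}\sin z(l_j-x)\,[\overline{\widetilde q_j}+\widetilde q_j+G_j]\,dx\big)\prod_{k\ne j}\sin zl_k\equiv 0$, invoke the small-support hypothesis to force each of the three constituent densities strictly below $l_j/\pi$, and then reach a contradiction by comparing with the full density $\tfrac{1}{\pi}\sum_j l_j$ through~(\ref{3.4}),~(\ref{3.6}),~(\ref{3.7}). The paper's argument is considerably terser than yours---after displaying the three strict inequalities it simply writes ``apply~(\ref{3.4}),~(\ref{3.6}), and~(\ref{3.7}) to reach the contradiction'' and stops, with no explicit peeling step and no separate closing modeled on Theorem~\ref{T3.2}---so the delicate points you flag (the support bound for $G_j$, and possible cancellations among summands of equal density) are issues the paper leaves equally unaddressed.
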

\begin{proof}
Suppose we have two such set of potential functions  and assume that $\Phi^{1}(z)\equiv_{\mathbb{C}}\Phi^{2}(z)$.  This implies that they have the same zero density as given by Theorem \ref{T31}, which has the quantity
 $$\frac{\sum_{j=1}^{m}l_{j}}{\pi}.$$
Let us try to balance the zero densities on both sides of the equation~(\ref{3.11}), and  plug all of these common zeros into the following function
\begin{eqnarray}\nonumber
&&\sum_{j=1}^{m}\Big(\int_{0}^{l_{j}}\sin{z(l_{j}-x)}\overline{\widetilde{q}_{j}(x)}dx+\int_{0}^{l_{j}}\sin{z(l_{j}-x)}\widetilde{q}_{j}(x)dx+\int_{0}^{l_{j}}\sin{z(l_{j}-x)}G_{j}(x)dx\Big) \prod_{k\neq j}\sin{z l_{k}}.
\end{eqnarray}
Since for all $1\leq j\leq m$ by the theorem assumption,
\begin{eqnarray}
&&\delta(\int_{0}^{l_{j}}\sin{z(l_{j}-x)}\overline{\widetilde{q}_{j}(x)}dx)\,<\frac{l_{j}}{\pi};\\
&&\delta(\int_{0}^{l_{j}}\sin{z(l_{j}-x)}\widetilde{q}_{j}(x)dx)\,<\frac{l_{j}}{\pi};\\
&&\delta(\int_{0}^{l_{j}}\sin{z(l_{j}-x)}G_{j}(x)dx)<\frac{l_{j}}{\pi}.
\end{eqnarray}
Let us apply~(\ref{3.4}),~(\ref{3.6}), and~(\ref{3.7}) to reach the contradiction. The theorem is thus proven.

\end{proof}

\begin{acknowledgement}
The author sincerely  thanks Professor Chung-Tsun Shieh at Tamkang University, and Professor Natalia P. Bondarenko at Peoples' Friendship University of Russia (RUDN University) for many useful advises during the preparation of this manuscript. The author also ingenuously appreciates the research funding supported by NSTC under the project number 113-2115-M-131 -001. The content of this manuscript does not necessarily reflect the position or the policy of administration, and no official endorsement should be inferred, neither.

\end{acknowledgement}

\end{document}